\DeclareMathOperator*{\argmax}{arg\,max}
\DeclareMathOperator*{\maximize}{maximize}
\newtheorem{defn}{Definition}
\newtheorem{thrm}{Theorem}
\newtheorem{clm}{Claim}
\renewcommand\footnotetextcopyrightpermission[1]{} 
\title{On the Design of Strategic Task Recommendations for Sustainable Crowdsourcing-Based Content Moderation}
\author{Sainath Sanga}
\affiliation{
\department{Department of Computer Science \\[-0.05ex]}
\institution{Missouri University of Science and Technology \\[-0.05ex]}
\city{Rolla, Missouri}
\\[-0.05ex]
\country{United States of America
\\[-0.05ex]}
}
\email{Email: ss7db@umsystem.edu}
\author{Venkata Sriram Siddhardh Nadendla}
\affiliation{
\department{Department of Computer Science \\[-0.05ex]}
\institution{Missouri University of Science and Technology \\[-0.05ex]}
\city{Rolla, Missouri}
\\[-0.05ex]
\country{United States of America
\\[-0.05ex]}
}
\email{Email: nadendla@umsystem.edu}
\begin{abstract}
Crowdsourcing-based content moderation is a platform that hosts content moderation tasks for crowd workers to review user submissions (e.g. text, images and videos) and make decisions regarding the admissibility of the posted content, along with a gamut of other tasks such as image labeling and speech-to-text conversion. In an attempt to reduce cognitive overload at the workers and improve system efficiency, these platforms offer personalized task recommendations according to the worker's preferences. However, the current state-of-the-art recommendation systems disregard the effects on worker's mental health, especially when they are repeatedly exposed to content moderation tasks with extreme content (e.g. violent images, hate-speech). In this paper, we propose a novel, strategic recommendation system for the crowdsourcing platform that recommends jobs based on worker's mental status. Specifically, this paper models interaction between the crowdsourcing platform's recommendation system (leader) and the worker (follower) as a Bayesian Stackelberg game where the type of the follower corresponds to the worker's cognitive atrophy rate and task preferences. We discuss how rewards and costs should be designed to steer the game towards desired outcomes in terms of maximizing the platform's productivity, while simultaneously improving the working conditions of crowd workers.
\end{abstract} 
\keywords{Content Moderation, Crowdsourcing, Strategic Task Recommendation, Cognitive Atrophy, Bayesian Stackelberg Game, Rewards}
\begin{document}




\maketitle 
    
\section{Introduction}
Several users post objectionable/offensive content pertaining to sexual abuse, child pornography, groundless violence and disturbing hate-filled messages on the web. Therefore, platforms that host user-generated content (UGC) often rely on content moderation industry to eliminate any such uses. Since thousands of workers moderate numerous disturbing images/videos every day (\cite{Chen2014}, \cite{krause2016inside}, \cite{barrett2020moderates}), the recurring exposure to disturbing content can have considerable adverse effects on the worker's mental health (\cite{Chen2012}, \cite{Ghoshal2017}, \cite{barrett2020moderates}). In order to maximize the sustainability and productivity of its workforce, we propose a novel system where content moderation tasks are serviced alongside other less-strenuous tasks (e.g. animal labeling) on a crowdsourcing platform so that effective task interventions can be designed to mitigate "occupational burn-out" and sustain worker's ability to work over longer time horizons. However, crowdsourcing-based content moderation could not be scaled extensively mainly due technical challenges such as privacy concerns \cite{kittur2013future} at UGC-hosting platforms and promoting worker sustainability and productivity. This paper specifically focuses on improving the sustainability and productivity of the worker and reduce the cognitive loads on the worker.

There have been some initial attempts made to reduce the cognitive impact on the content moderators. For example, Dang \emph{et al.} designed a system in \cite{dang2018but} that blurs some parts of the inappropriate content in an image to minimize cognitive trauma. However, Karunakaran \emph{et al.} showed that blurring an image had a further negative impact on emotional affect of worker \cite{karunakaran2019testing}. Rather, simple stylistic transformations such as gray-scaling has a significant positive effect of the worker while reviewing the most violent and extreme images. On the contrary, Kaur \emph{et al.} introduced a task-independent approach in \cite{kaur2017crowdmask} for filtering potentially sensitive information using CrowdMask, a system that filters images with potentially sensitive content based on natural language definition without revealing too much information to workers along the way. However, CrowdMask works in practice only if people are recruited to create filters for sensitive content, which defeats the purpose here. Barrett \emph{et al.} presented a series of recommendations in \cite{barrett2020moderates} to improve critical aspects of content moderation and fact-checking, which are listed below: (i) End outsourcing of content moderators and raise their station in the workplace, (ii) Double the number of moderators to improve the quality of content review, (iii) Provide all moderators with top-quality, on-site medical care and (iv) Sponsor research into the health risks of content moderation.

In spite of all these attempts, the current state-of-the-art recommendation systems on a crowd sourcing platforms rely only on attributes such as worker's task ratings (\cite{ma2007effective}, \cite{xin2009social}), worker's performance and task preferences \cite{yuen2011task}, worker's cognitive abilities \cite{Isinkaye2015} and worker's stopping time \cite{monica} but still do not consider the mental state of a worker directly, thus limiting the ability to improve task performance and worker's health. The main contribution of this paper is to model and design a strategic crowdsourcing-based content moderation framework that mitigates the mental health risks at workers and promotes sustainability in crowd-workforce by designing appropriate system/worker costs and rewards. Due to the misalignment of motives at both system and worker and the lack of knowledge regarding the true mental state of a worker, we model the interaction between the task recommendation system and the worker as a Bayesian Stackelberg game. We assume that the recommender system acts as a leader and recommends a task to the worker, who then makes a decision whether to pick the system's recommendation, or choose a different task. Modeling these interactions as a Bayesian Stackelberg game helps the system to recommend appropriate jobs strategically to the worker depending on the worker's type (based on cognitive atrophy rate and the task preferences) so as to simultaneously improve system productivity as well as mitigate worker's cognitive atrophy. For example, it is beneficial for both system and worker if the worker does a content moderation job when he/she is emotionally undisturbed. On the contrary, the emotional state of a mentally disturbed worker can get worse if he/she performs a content moderation job. Therefore, we expect a desired outcome in this game when the worker does content moderation in an undisturbed state. We use the rewards and costs to make sure that their best responses steer the game towards desired outcomes. 

\section{Strategic System-Worker Interaction Model \label{Section: Model}}
\begin{figure*}[!t]
\centering
\includegraphics[width=0.60\textwidth]{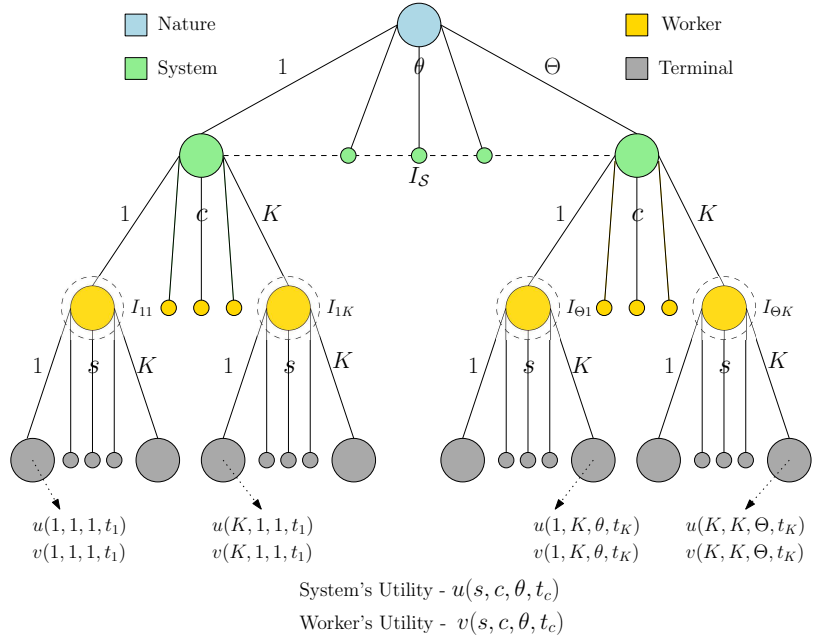}
\caption{Modeling System-Worker Interaction as a Bayesian Stackelberg Game}
\label{Fig: BSG-TJ}
\vspace{-3ex}
\end{figure*}
Consider a framework $\mathcal{N} = \{ \mathcal{S},\ \mathcal{W} \}$, where a crowd-sourcing system $\mathcal{S}$ interacts with a worker $\mathcal{W}$ as shown in Figure \ref{Fig: BSG-TJ}. Assume that any task generated on the crowdsourcing platform is categorized into the $k^{th}$ task type for some $k \in \{1, \cdots, K\}$, based on its job description, time deadline $\tau_k$ and reward $r_k$. Assume that a given $n^{th}$ worker has a linear preference order $\ell_n$ (complete and stationary) over the $K$ task types. In pursuit of several tasks in a sequential manner, the worker experiences cognitive atrophy at a rate $\beta_{n,t} \in [0,1]$ at time $t$, where $\beta_{n,t} = 1$ when the $n^{th}$ worker is completely fatigued and $\beta_{n,t} = 0$ when the $n^{th}$ worker is high-spirited. In this paper, we assume the type of the $n^{th}$ worker at time $t$ as a tuple $\boldsymbol{\theta}_{n,t} = \{ \beta_{n,t}, \ \ell_n \}$.

Note that there is little/no worker-specific data available openly to model cognitive atrophy in real content moderators, primarily due to worker's privacy concerns. However, Devaguptapu \emph{et al.} have discussed the potential fitness of discounted satisficing model in the context of content moderation. More specifically, discounted satisficing model assumes that the worker's satisficing target discounts with time at a fixed cognitive atrophy rate $\beta \in [0,1]$. However, in this paper, we assume a more general model where the cognitive atrophy rate $\beta_{n,t}$ can change with time, but falls within one of the discrete categories as shown below:
\begin{equation}
\beta_{n,t}(\beta) = 
\begin{cases}
1 & \text{ if } \beta \in [0.00-0.25], 
\\[2ex]
2 & \text{ if } \beta \in [0.26-0.50],
\\[2ex]
3 & \text{ if } \beta \in [0.51-0.75], 
\\[2ex]
4 & \text{ if } \beta \in [0.76-1]
\end{cases}
\label{Eqn: beta alpha intervals}
\end{equation}
if $\beta$ were the true rate of cognitive atrophy (e.g. discounting factor in the discounted satisficing model, as discussed in Devaguptapu \emph{et al.} in \cite{monica}) at any given moment. As a result, we can now investigate multi-round sequential system-worker interactions as a repeated game in the future. However, the scope of this paper is limited to modeling a one-shot interaction as a Bayesian Stackelberg game.


\begin{clm}
Given the cognitive atrophy rate $\beta_{n,t}$ and preference order $\ell_n$ over $K$ types of tasks of the $n^{th}$ worker, from Equation \eqref{Eqn: beta alpha intervals}, the $n^{th}$ worker  will have $\Theta = 4 \cdot (K!)$ number of possible types.
\label{Claim: Number of types}
\end{clm}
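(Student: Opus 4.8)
The plan is to treat this as a straightforward product-counting argument, since the worker's type $\boldsymbol{\theta}_{n,t} = \{\beta_{n,t}, \ell_n\}$ is an ordered pair whose two coordinates range over finite sets that are specified independently of one another. First I would fix notation: let $B$ denote the set of admissible values of the (discretized) cognitive atrophy rate, and let $L$ denote the set of admissible preference orders $\ell_n$; the claim then amounts to showing $|B| = 4$, $|L| = K!$, and that the set of types is exactly the Cartesian product $B \times L$, so that $\Theta = |B \times L| = |B|\cdot|L| = 4 \cdot (K!)$.

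The first step is to count $|B|$. By inspecting Equation~\eqref{Eqn: beta alpha intervals}, the map $\beta \mapsto \beta_{n,t}(\beta)$ partitions the interval $[0,1]$ into the four sub-intervals $[0.00,0.25]$, $[0.26,0.50]$, $[0.51,0.75]$, $[0.76,1]$ and assigns to them the labels $1,2,3,4$ respectively; hence the discretized rate takes exactly $4$ distinct values, i.e. $|B| = 4$. The second step is to count $|L|$: since $\ell_n$ is assumed to be a complete (total) and stationary preference order over the $K$ task types, it corresponds to a permutation of the $K$ types, and there are exactly $K!$ such permutations, so $|L| = K!$.

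The third step is to argue that every combination in $B \times L$ is a legitimate type and that no two distinct combinations give the same type. This follows because $\beta_{n,t}$ (a property of the worker's instantaneous fatigue) and $\ell_n$ (a fixed, stationary ranking of task types) are logically unconstrained with respect to each other — any fatigue category can co-occur with any preference order — and because two tuples are equal only if both coordinates agree. Applying the multiplication principle for finite sets then yields $\Theta = |B|\cdot|L| = 4\cdot(K!)$, which is the claim.

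I do not anticipate a genuine obstacle here; the only point that warrants a sentence of care is justifying that the four intervals in Equation~\eqref{Eqn: beta alpha intervals} are mutually exclusive and exhaustive over $[0,1]$ (so that $|B|$ is exactly $4$, not fewer or more), and that the leader/follower setup places no coupling constraint between $\beta_{n,t}$ and $\ell_n$ that would forbid some of the $4\cdot(K!)$ pairs. Once those two observations are in place, the count is immediate.
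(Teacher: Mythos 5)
Your argument is correct and matches the reasoning the paper implicitly relies on (the claim is stated there without an explicit proof): Equation~\eqref{Eqn: beta alpha intervals} yields $4$ discrete atrophy categories, a complete preference order over $K$ task types is one of $K!$ permutations, and since the type is the independent pair $\boldsymbol{\theta}_{n,t} = \{\beta_{n,t}, \ell_n\}$ the multiplication principle gives $\Theta = 4\cdot(K!)$. Your added care about the intervals being exhaustive and the two coordinates being unconstrained is reasonable but not something the paper addresses.
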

In order to reduce information overloading at the worker \cite{Isinkaye2015}, the system $\mathcal{S}$ recommends a task $s \in \{ 1, \cdots, K \}$ to the worker depending on the workers type $\theta$. The system defines a matching $m(k,\theta)$ between the $k^{th}$ task type and the worker's type $\theta$ to construct a task recommendation $s(m)$. This matching $m(k,\theta)$ is the system's judgement whether or not, a given task type $k$ is suitable to a specific worker type $\theta$, as shown below.  
\begin{defn}
Matching between any task type $k$ and any worker type $\theta$ is defined as: 
\begin{equation}
m(k, \theta) = 
\begin{cases}
0; \ \text{if task type $k$ suits worker's type $\theta$,}
\\[2ex]
1; \ \text{if task type $k$ does not suit worker's type $\theta$.}
\end{cases}
\label{Eqn: matching}
\end{equation}
\end{defn}
However, the true worker type $\theta$ is not known perfectly at the system, making it difficult to construct a task recommendation $s_m(\theta) = s$. In this paper, we assume that the system has a belief $\pi(\theta)$ regarding the worker's true type and chooses $s_m(\theta) = s$ such that its expected utility (defined in Section 3) is maximized. 

Having received a job recommendation $s$, the worker chooses a task $c\in \{ 1, \cdots, K \}$. If the worker is willing to do the recommended job $s$, they follow the given recommendation and complete the task ($c = s$). If the worker does not want to complete the recommended job $s$, they will choose another task ($c \not= s$). We assume that the worker takes time $t_c$ to complete the chosen task $c$. 

We define both the players utilities using the rewards and costs that depend on the type of the worker $\theta$ and the task $c$ that is completed by the worker. Let the rewards obtained at the system $\mathcal{S}$ and the worker $\mathcal{W}$ be denoted as $\phi_c$ and $\psi_c$, when the outcome of the interaction resulted in the completion of a job $c$. On the other hand, both the system and the worker experience a cost $\mu$ when the worker does not follow the system's recommendation $(c\not=s)$ (due to information overloading \cite{Isinkaye2015}). On the other hand, the system considers this misalignment as the cost (denoted as $\lambda$) of inefficiency in persuading the worker. 

In this paper, we assume that the worker experience a cost $\kappa_{c,\theta}$ after completing a job $c$ that does not match their type $\theta$ i.e, when $m(c,\theta) = 1$. This cost captures the impact of job $c$ on the mental state of the worker. Therefore, $\kappa_{c,\theta}$ varies depending on the type $\theta$ of the worker. For example, consider a mentally disturbed worker completing a content moderation job and a mentally undisturbed worker completing a content moderation job. The mentally disturbed worker will incur higher implicit costs, because, working on a content moderation job in a mentally disturbed state will further deteriorate the worker's mental state. We combine the rewards and costs to calculate the instantaneous utilities at both the players. Also, the utilities at both players depend on: (i) time deadline $\tau_c$, and (ii) task completion time $t_c$, for the chosen task $c$. 

\begin{defn}
The system's utilities are defined as
\begin{equation}
u(s, c,\theta, t_{c}) = 
\begin{cases}
\phi_{c}, & \text{when} \ c = s \ \text{and} \ t_{c} \leq \tau_c, 
\\[2ex]
\phi_{c}  - \lambda, & \text{when} \ c \neq s \ \text{and} \  t_c \leq \tau_c,
\\[2ex]
0, & \text{when} \ t_c > \tau_c.
\end{cases}
\label{Eqn: system utilities}
\end{equation}
\end{defn}
System attains a cost $\lambda$ when the worker chooses a task other than the recommended task ($c \not = s$). 
\begin{defn}
The worker's utilities are defined as
\begin{equation}
v(s, c,\theta, t_{c}) = 
\begin{cases}
\psi_{c} - \kappa_{c, \theta}\cdot m(c,\theta), & \text{when} \ c = s \ \text{and} \ t_{c} \leq \tau_c, 
\\[2ex]
\psi_{c} - \kappa_{c, \theta}\cdot m(c,\theta) - \mu, & \text{when} \ c \neq s \ \text{and} \  t_c \leq \tau_c,
\\[2ex]
0, & \text{when} \ t_c > \tau_c.
\end{cases}
\label{Eqn: worker utilities}
\end{equation}
\end{defn}

Given that both system and worker have non-identical utilities (i.e., mismatched motives), it is natural to model this system-worker interaction as a Stackelberg game where one player will commit to a strategy before the other players choose their own strategies \cite{fudenberg1991game}.  In our model the system acts as a leader and recommends a task $s$ to the worker. Being the follower, the worker observes the system's recommendation $s$ and choose a task $c$ depending on their type $\theta$. However, in practice, although the worker $\mathcal{W}$ may know their true type while choosing a task, the system $\mathcal{S}$ does not know the same while recommending a task. Therefore, in this paper, we assume that the system $\mathcal{S}$ has incomplete knowledge about worker's type $\theta$. Consequently, the system has a prior belief about the type of the worker $\pi(\theta)$, which produces non-trivial information sets (denoted as $I_{\mathcal{S}}$) at the system. 
In summary, since there are different types of workers, the strategic interaction between system $\mathcal{S}$ and worker $\mathcal{W}$ is modeled as a \emph{Bayesian Stackelberg game}. 

\section{Optimal Strategies \label{Section: optimal strategies}}

The System $\mathcal{S}$ recommends a task $s \in \{ 1, \cdots, K \}$ by committing to a mixed strategy $\sigma \in \Sigma$ where $\sigma(s)$ is the probability of the system recommending task $s$. Given a recommendation $s$, the worker chooses the best response that maximizes their utility $v(s, c,\theta, t_{c})$. Since the worker can observe leader’s strategy, the best response for the follower will be a pure strategy $c^*(s, \theta)$ given as: 
\begin{equation}
    c^*(s, \theta) = \argmax_{c \in \{ 1, \cdots, K \}} \: v(s, c,\theta, t_{c}). 
\label{Eqn: worker best response}
\end{equation}
Given the best response of the follower and a prior $\pi(\theta)$ over all the possible follower types $\Theta$, the expected utility of the leader's strategy $\sigma$ is  given as: 
\begin{equation}
    \mathbb{E}(u(\sigma)) = \sum_{\theta =1}^{\Theta} \pi(\theta) \sum_{s = 1}^{K} \sigma(s)\cdot u(s, c^*(s, \theta),\theta, t_{c}),
\label{Eqn: sytem expected utility}
\end{equation}
The problem of choosing an optimal strategy for the leader to commit to has been previously analyzed in \cite{conitzer2006computing, sandholm2005mixed, korzhyk2010complexity,paruchuri2008playing,letchford2009learning,letchford2010computing,paruchuri2007efficient}. One of the well-known properties about this problem is stated in the following theorem.
\begin{thrm}[Conitzer-Sandholm 2006 \cite{conitzer2006computing}]
Finding the optimal leader's strategy in a Bayesian Stackelberg game is NP-hard. 
\label{Thrm: NP-Hard}
\end{thrm}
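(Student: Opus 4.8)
The plan is to prove NP-hardness by a polynomial-time many-one reduction from $3$-SAT, attaching to every Boolean formula a two-player Bayesian Stackelberg game together with a target value $V$ so that the leader admits a commitment strategy of expected utility at least $V$ precisely when the formula is satisfiable. Since NP-hardness of this decision version immediately yields NP-hardness of the optimization problem (``finding the optimal leader's strategy''), it suffices to handle the threshold question. In the notation of Section~\ref{Section: Model} this will be a game with $\mathcal{S}$ as leader and a single follower whose type $\theta$ ranges over a polynomially-sized set $\Theta$; only a modest number of types is needed for the bare statement, although it is known one can push this down to two types with a more delicate gadget.

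First I would let the leader's pure action set encode truth assignments: give the leader two actions $a_i^{\top}, a_i^{\bot}$ for each variable $x_i$, so that a mixed strategy $\sigma$ induces a ``fractional truth value'' $p_i = \sigma(a_i^{\top}) / (\sigma(a_i^{\top}) + \sigma(a_i^{\bot}))$ for each $i$ (the construction will keep the denominators bounded away from $0$). Next, introduce one follower type $\theta_C$ for each clause $C$, with payoffs designed so that this type's best response — and the leader's payoff against that best response — is favorable exactly when $\sigma$ places enough mass on at least one literal occurring in $C$, i.e. when $C$ is ``covered.'' Finally, add a small constant number of auxiliary follower types whose only role is to penalize the leader whenever $\sigma$ is spread in a way that is not (close to) a consistent $0/1$ assignment, thereby forcing the optimal $\sigma$ to behave essentially integrally; the prior $\pi$ is chosen to weigh these enforcement types correctly against the clause types in $\mathbb{E}(u(\sigma))$.

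With the gadget fixed, correctness splits into two directions. \emph{Completeness:} a satisfying assignment yields the obvious integral $\sigma$ under which every clause type contributes its full value and no enforcement type is triggered, so $\mathbb{E}(u(\sigma)) \geq V$. \emph{Soundness:} any $\sigma$ with $\mathbb{E}(u(\sigma)) \geq V$ must, by the enforcement types, be (near-)integral, hence decodes to a genuine assignment, and since every clause type nonetheless delivered its full contribution, that assignment satisfies every clause. Together with the evident polynomial size of the construction and the polynomial bit-length of $V$, this completes the reduction.

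The step I expect to be the main obstacle is calibrating the payoff gadgets so that \emph{no} fractional commitment strategy can beat the best integral one: randomization is exactly the extra power a Stackelberg leader enjoys, and a careless construction lets the leader hedge mass across clauses and collect value from an unsatisfiable formula. Closely related is the usual Stackelberg tie-breaking subtlety — whether an indifferent follower breaks ties in the leader's favor — which the reduction must either survive under both conventions or neutralize by perturbing the gadget payoffs with a sufficiently small $\varepsilon$ that does not disturb the comparison with $V$. Getting these two points right is where the real work lies; the remaining bookkeeping (bounding the number of types and actions, rationality of $V$) is routine.
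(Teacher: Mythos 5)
First, a point of reference: the paper itself contains no proof of this statement. Theorem~\ref{Thrm: NP-Hard} is imported verbatim from Conitzer and Sandholm \cite{conitzer2006computing}, so the only argument to measure you against is theirs. Your plan --- reduce \textsc{3-Sat} to the threshold version of the commitment problem --- is a legitimate shape for such a proof, and you are right that hardness of the decision version suffices. But as written the proposal is not a proof, because the step you yourself flag as ``where the real work lies'' (preventing a fractional commitment strategy from beating every integral one) is exactly the step left unconstructed, and the device you propose for it --- ``auxiliary follower types whose only role is to penalize the leader whenever $\sigma$ is not close to a consistent $0/1$ assignment'' --- does not obviously exist. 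A follower type affects the leader's expected utility only through that type's best response; the best-response regions of any type are polyhedra in the leader's strategy simplex, on each of which the leader's payoff is linear. The set of consistent integral assignments is a finite set of points of that simplex, and a piecewise-linear penalty supported on polyhedral cells cannot punish all fractional $\sigma$ while sparing those isolated points. Similarly, the quantity $p_i = \sigma(a_i^{\top})/(\sigma(a_i^{\top})+\sigma(a_i^{\bot}))$ is a ratio of linear forms and does not align with the linear inequalities governing the follower's best responses, so ``clause $C$ is covered exactly when some $p_i$ encodes a satisfying literal'' is not a condition a payoff matrix can directly express.

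The argument in \cite{conitzer2006computing} sidesteps integrality enforcement entirely by reducing from a covering problem (\textsc{Vertex-Cover}) rather than from satisfiability: leader actions correspond to vertices, there is one follower type per edge, and payoffs are arranged so that the type associated with edge $(u,v)$ contributes positively to the leader's expected utility exactly when $\sigma$ places probability at least $1/k$ on $u$ or on $v$. Completeness is the uniform distribution over a cover of size $k$; soundness is the pigeonhole observation that at most $k$ actions can each carry probability at least $1/k$, so the ``heavy'' vertices form a cover of size at most $k$. That counting step is the integrality-forcing mechanism your sketch is missing, and it requires no enforcement types and no tie-breaking perturbation. A SAT-based route could perhaps be salvaged with an analogous quantitative gadget (thresholds calibrated so that hedging mass across the two literals of one variable forfeits more clause value than it gains), but until such a gadget is written down and its soundness direction verified, the proposal has a genuine gap at its load-bearing step.
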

In most cases, Bayesian Stackelberg games are analyzed according to the solution concept of a Bayes-Nash equilibrium, an extension of the Nash equilibrium for Bayesian games. However, in many settings, a Nash or Bayes-Nash equilibrium is not an appropriate solution concept, since it assumes that the agents’ strategies are chosen simultaneously \cite{conitzer2006computing}. Methods such as mixed-integer linear programs (MILPs) \cite{sandholm2005mixed}, used for finding optimal leader strategies for non-Bayesian games, can be applied to this problem by reducing the Bayesian game into a normal-form game representation using Harsanyi transformations \cite{harsanyi1972generalized}. Paruchuri \emph{et al.} have presented an efficient exact method for finding the optimal leader strategy that operates directly on the compact Bayesian representation in \cite{paruchuri2008playing}. In this method, the follower may only know the mixed strategy when choosing its strategy and it exploits the independence of the different follower types to obtain a decomposition scheme. In our game setting, the follower will know the leader's action before choosing it's own strategy i.e., the worker will know the recommendation before they choose their own job. Therefore, the method presented by Paruchuri \emph{et al.} in \cite{paruchuri2008playing} cannot be used in our game settings.

We first convert our Bayesian game to a normal-form game using Harsanyi transformations \cite{harsanyi1972generalized}. In this normal-form game, the row player is the system and the column player is the worker. The system has $K$ number of choices. Each worker can choose from $K$ tasks, therefore, there will be  $K^\Theta$ number of columns in the transformed normal form game. In the transformed normal-form game, we use the  Multiple-LP's method introduced in \cite{conitzer2006computing} to find the leader's optimal strategy using the following linear programming formulation:      

For every pure strategy $c$ of the follower,
\begin{equation}
\begin{array}{rl}
\displaystyle \maximize_{\sigma \in \Sigma} & \displaystyle \sum_{s = 1}^K \sigma(s)u(s, c,\theta, t_{c})
\\[3ex]
\text{subject to} & \text{1. } \displaystyle \sum_{s = 1}^K \sigma(s)u(s, c',\theta, t_{c}) \leq \sum_{s = 1}^K \sigma(s)u(s, c,\theta, t_{c}), 
\\[2ex]
& \qquad \qquad \qquad \qquad \qquad \qquad \forall c' \in \{ 1, \cdots, K \},
\\[-0.5ex]
& \text{2. } \displaystyle \sum_{s = 1}^K \sigma(s) = 1, \text{ and } \sigma(s) \geq 0, \ \forall s \in \{ 1, \cdots, K \}.
\end{array}
\end{equation}

The $\sigma(s)$ variables give the optimal strategy for the system. This method is polynomial in the number of columns (worker's actions) in the transformed game. For a game with $\Theta$ number of types and $K$ number of tasks, the transformed normal-form game will have $K^\Theta$ number of columns. We have to run $K^\Theta$ number of separate linear programs to solve this problem. In other words, we have the following claim.
\begin{clm}
The game with $\Theta$ number of types and $K$ number of jobs can be solved by solving $K^\Theta$ number of linear programs, each with a time complexity of $O(K^\Theta)$. In summary, the overall game complexity is of the order $O(K^{2\Theta})$.  
\label{Claim: Game Complexity}
\end{clm}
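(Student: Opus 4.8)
The plan is to track the size of the game representation through the Harsanyi transformation and then tally the work performed by the Multiple-LP procedure of \cite{conitzer2006computing}, so the argument reduces to three counting steps followed by one multiplication.

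First I would count the columns of the transformed normal-form game. Under the Harsanyi transformation a pure strategy of the follower is a complete contingent plan that specifies, for each of the $\Theta$ possible worker types, which of the $K$ tasks that type would choose; there are thus exactly $K^{\Theta}$ such plans, i.e.\ $K^{\Theta}$ columns. The leader (the system) still has only $K$ pure strategies --- one per recommended task $s$ --- so the transformed payoff matrix is $K \times K^{\Theta}$. Here $\Theta = 4\cdot(K!)$ by Claim~\ref{Claim: Number of types}, but only its value as an integer enters the count.

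Second I would invoke the Multiple-LP method: for each follower pure strategy $c$ we solve the linear program displayed above, which maximizes the leader's expected utility subject to $c$ being a best response for the follower, and then take the best objective over all these programs. Since the follower has $K^{\Theta}$ pure strategies, this amounts to solving $K^{\Theta}$ linear programs. Third I would bound the size, and hence the cost, of a single program: it has $K$ variables $\sigma(1),\dots,\sigma(K)$, one incentive constraint per alternative follower strategy $c'$ (there are $K^{\Theta}$ of these), plus the normalization and non-negativity constraints. Taking the cost of solving such a program to be linear in its number of constraints gives $O(K^{\Theta})$ per program, and multiplying the number of programs by the per-program cost yields $K^{\Theta}\cdot O(K^{\Theta}) = O(K^{2\Theta})$, the claimed bound.

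The step I expect to need the most care is the ``$O(K^{\Theta})$ per linear program'' estimate, since the exact running time of an LP solver depends on the algorithm and on the bit-complexity of the coefficients. I would therefore make explicit the convention (implicit in \cite{conitzer2006computing}) that the per-LP cost is measured by the number of constraints, observe that the constraint count $K^{\Theta}$ dominates the variable count $K$, and note that each matrix entry $u(s,c,\theta,t_c)$ is assembled from the fixed rewards and costs of Section~\ref{Section: Model} and so contributes only constant overhead. Everything else is bookkeeping on the Harsanyi transformation and on the Multiple-LP construction.
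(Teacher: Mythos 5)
Your proposal is correct and follows essentially the same route as the paper: count the $K^{\Theta}$ follower pure strategies produced by the Harsanyi transformation, run one LP of the Multiple-LP method per such strategy, and bound each LP's cost by its constraint count to get $K^{\Theta}\cdot O(K^{\Theta}) = O(K^{2\Theta})$. If anything you are more careful than the paper, which simply asserts the $O(K^{\Theta})$ per-LP cost, whereas you make explicit both the convention (cost measured by number of incentive constraints in the transformed game) and its limitation (real LP solvers are polynomial, not linear, in input size).
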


As stated in Claim \ref{Claim: Number of types}, the number of worker types grows rapidly as $\Theta = 4 \cdot K!$. Consequently, the complexity of this game is given by $O(K^{8\cdot K!})$, which grows very rapidly with increasing $K$. Therefore, it is necessary to categorize tasks into just two types $(K = 2)$ for the sake of tractability. In other words, the tasks are segregated into tasks based on whether or not it causes cognitive atrophy (e.g. as in the case of content moderation tasks). In spite of such a simplification, the problem of designing a task recommendation at each worker still remains computationally burdensome. This can severely impair the scalability of the proposed system across a large number of crowd workers. Therefore, in this paper, we present an open call for novel algorithms to the research community regarding solving this challenging problem of designing a scalable task recommender system that takes into account worker's cognitive atrophy. However, in the remaining paper, we focus our attention on reward structures to steer worker's best responses towards appropriately matching the types of workers and tasks.

\section{Reward Structures}
The approach of using a Bayesian Stackelberg Game is to make sure that system's task recommendation matches the worker's type ($\theta$). Focusing on content moderation, a desired outcome of this game will be, system recommending a content moderation task to a high spirited worker($\beta = 1$) but not to a fatigued worker ($\beta = 4)$. The Stackelberg game can be steered towards desired outcomes when the worker's best response is to choose the tasks that match their type $\theta$. Consequently, we work with worker's rewards and costs for effective persuasion towards desired outcomes.   
\begin{thrm}
A worker $\mathcal{W}$ chooses a task $k$ that matches with their own type $\theta$ when 
\begin{equation}
\psi_{-s} - \psi_s - \kappa_{-s, \theta}\cdot m(-s,\theta) \ < \ \mu \ < \ \psi_{-s} - \psi_{s} + \kappa_{s, \theta}.
\end{equation}
\label{Thrm: Reward Conditions}
\end{thrm}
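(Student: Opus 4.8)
The plan is to work in the two-task regime $K=2$ adopted at the end of Section~\ref{Section: optimal strategies}, since the notation $-s$ in the statement only makes sense there: once the system commits to recommending $s$, the only alternative available to the worker is the other task $-s$, so the worker's best response \eqref{Eqn: worker best response} collapses to a single pairwise comparison between $v(s,s,\theta,t_s)$ and $v(s,-s,\theta,t_{-s})$. I would assume throughout that both tasks can be completed within their deadlines ($t_c\le\tau_c$), so that only the first two branches of \eqref{Eqn: worker utilities} are active, and append the $t_c>\tau_c$ branch as a boundary remark. Recall that ``the worker chooses a task that matches their type'' means selecting the $c$ with $m(c,\theta)=0$.

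Next I would split on whether the recommended task itself matches. If $m(s,\theta)=0$, the desired outcome is $c^*(s,\theta)=s$; substituting \eqref{Eqn: worker utilities} gives $v(s,s,\theta,t_s)=\psi_s$ (the mismatch cost vanishes) and $v(s,-s,\theta,t_{-s})=\psi_{-s}-\kappa_{-s,\theta}\,m(-s,\theta)-\mu$, so the worker strictly prefers to follow the recommendation exactly when $\mu>\psi_{-s}-\psi_s-\kappa_{-s,\theta}\,m(-s,\theta)$ — the left inequality, where strictness is imposed so that no tie-breaking convention is needed. If instead $m(s,\theta)=1$, then $-s$ is the matching task, the desired outcome is $c^*(s,\theta)=-s$, and now $v(s,s,\theta,t_s)=\psi_s-\kappa_{s,\theta}$ while $v(s,-s,\theta,t_{-s})=\psi_{-s}-\mu$, so the worker strictly prefers to deviate to the matching task exactly when $\mu<\psi_{-s}-\psi_s+\kappa_{s,\theta}$ — the right inequality. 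Intersecting the two requirements yields the stated two-sided bound.

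The algebra above is routine; the care goes into the interpretation and the corner cases. I would state explicitly that the claim concerns the nontrivial regime in which exactly one of the two tasks matches $\theta$ (if both match, any choice is ``matching''; if neither matches, the requirement is vacuous). I would note that the interval is nonempty since $\kappa_{s,\theta}>0$ and $\kappa_{-s,\theta}\,m(-s,\theta)\ge 0$ by the modeling assumptions on $\kappa_{c,\theta}$, so a designer can always pick a valid $\mu$; and that the $t_c>\tau_c$ branch is absorbed by the additional, separate requirement that the matching task be deadline-feasible and yield nonnegative worker utility, so the zero-payoff branch never dominates. Finally I would remark that these bounds make the matching task the worker's best response only \emph{for a given} recommendation $s$; steering the game globally still needs the leader's committed strategy $\sigma$ to place recommendations appropriately. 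The only genuine obstacle is the bookkeeping of which cost appears on which side: in the mismatch branch the worker's own implicit cost $\kappa_{s,\theta}$ acts \emph{in favor} of deviation, so one must verify that it is precisely $\kappa_{s,\theta}$ (the cost of the recommended, mismatched task), with no $m$ factor since $m(s,\theta)=1$ there, that bounds $\mu$ from above — and symmetrically that $\kappa_{-s,\theta}\,m(-s,\theta)$ is what bounds it from below in the other branch.
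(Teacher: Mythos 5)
Your proposal is correct and follows essentially the same route as the paper's own proof: a case split on $m(s,\theta)\in\{0,1\}$, with the lower bound on $\mu$ coming from the matched-recommendation case and the upper bound from the mismatched case, then intersecting. Your additional remarks (the implicit $K=2$ and ``exactly one task matches'' regime, nonemptiness of the interval, and the deadline branch) are clarifications the paper leaves tacit, but the core argument is identical.
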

\begin{proof}
We identify two possible cases based on the matching between the recommended task $s$ and the type of the worker $\theta$. In the first case, the system recommends a task that matches with the type of the worker and in the second case, the system recommends a task that does not match with the type of the worker.

\noindent
\textit{Case: 1.} When the system $\mathcal{S}$ recommends a task $s$ that matches with the type of the worker $\theta$, i.e., $m(s, \theta) = 0$, the worker's utility stated in Equation \eqref{Eqn: worker utilities} reduces to 
\begin{equation}
v(s, c,\theta, t_{c}) = 
\begin{cases}
\psi_{c}, & \text{when} \ c = s \ \text{and} \ t_{c} \leq \tau_c, 
\\[2ex]
\psi_{c} - \kappa_{c, \theta}\cdot m(c,\theta) - \mu, & \text{when} \ c \not = s \ \text{and} \  t_c \leq \tau_c,
\\[2ex]
0, & \text{when} \ t_c > \tau_c.
\end{cases}
\label{Eqn: worker utilities for case 1}
\end{equation}
Since the worker will have no incentive to choose a task that does not match their own type $\theta$, his/her best response will be to choose the recommended task $s$ if $v(s, c=s, \theta, t_c) > v(s, c \not = s, \theta, t_{c})$. Upon expanding the terms using Equation \eqref{Eqn: worker utilities for case 1}, we get
\begin{equation}
\psi_s > \psi_{-s} - \kappa_{-s, \theta}\cdot m(-s,\theta) -\mu. 
\label{Eqn: Case 1 condition}
\end{equation}
Rearranging the terms in Equation \eqref{Eqn: Case 1 condition}, we obtain the lower bound
\begin{equation}
\mu > \psi_{-s} - \psi_s - \kappa_{-s, \theta}\cdot m(-s,\theta). 
\label{Eqn: Case 1 condition}
\end{equation}

\noindent
\textit{Case: 2.} When the system $\mathcal{S}$ recommends a task $s$ that does not match with the type of the worker $\theta$, i.e., $m(s, \theta) = 1$, the utility of the system stated Equation \eqref{Eqn: worker utilities} reduces to 
\begin{equation}
v(s, c,\theta, t_{c}) = 
\begin{cases}
\psi_{c} - \kappa_{c, \theta}, & \text{when} \ c = s \ \text{and} \ t_{c} \leq \tau_c, 
\\[2ex]
\psi_{c} - \kappa_{c, \theta}\cdot m(c,\theta) - \mu, & \text{when} \ c \neq s \ \text{and} \  t_c \leq \tau_c,
\\[2ex]
0, & \text{when} \ t_c > \tau_c.
\end{cases}
\label{Eqn: worker utilities for case 2}
\end{equation}
The worker will receive higher utility when they will not follow the system's recommendation $s$ and choose a task $-s$ that matches their type $\theta$ if $v(s, c \not = s, \theta, t_{c}) > v(s, c = s, \theta, t_{c})$. Upon expanding the terms using Equation \eqref{Eqn: worker utilities for case 2}, we get
\begin{equation}
    \psi_{-s} - \mu > \psi_{s} - \kappa_{s, \theta}
\label{Eqn: Case 2 condition}
\end{equation}
Therefore from Equations \eqref{Eqn: Case 1 condition} and \eqref{Eqn: Case 2 condition}, the worker's best response will be to choose a task that matches their own type $\theta$ when
\begin{equation}
    \psi_{-s} - \psi_s - \kappa_{-s, \theta}\cdot m(-s,\theta) < \mu < \psi_{-s} - \psi_{s} + \kappa_{s, \theta}
\end{equation}
In this paper, we assume that $\kappa_{k, \theta}$ is upper bounded by $\psi_k$, because, from \eqref{Eqn: worker utilities}, if $\kappa_{k,\theta} > \psi_k$, the worker will attain negative utility and will never choose a task $k$ and the $k^{th}$ might never be completed before the time deadline $\tau_k$.  
\end{proof}
The System $\mathcal{S}$ can improve the recommendations given to the worker based on the conditions on the costs and rewards presented in Theorem \ref{Thrm: Reward Conditions} and essentially improve the overall productivity on the platform and simultaneously improving the working conditions of crowd workers. 

\section{Conclusions and Future Work}
In this paper, we modeled the interactions between task recommendation system and crowd worker in a crowdsourcing-based content moderation platform as a Bayesian Stackelberg game where the recommendation system recommends tasks depending on type of the worker. The utilities are designed depending on the matching between task type and worker type. We presented necessary conditions on worker's rewards and costs to steer the game towards desired outcomes. In the future, we will model repeated system-worker interactions using a dynamic game setting where the system can learn worker's type via updating its belief based on revealed task choices over time. Furthermore, we will also investigate trust dynamics at the worker and analyze how he/she updates their trust on system's recommendations. This is particularly interesting in a strategic setting given that the motives of the system and the worker are misaligned with each other.

\bibliographystyle{plain}
\bibliography{references.bib}
\end{document}